\def\bO{\bar{\Omega}}
\def\bH{\bar{H}}
\def\bG{\bar{G}}
\def\bg{\bar{g}}
\def\bx{\bar{x}}
\def\ee{\hspace{-0cm} & \hspace{-0cm} }
\newtheorem {theorem}{Theorem}
\newtheorem {lemma}{Lemma}
\newtheorem {assumption}{Assumption}
\newtheorem {remark}{Remark}
\def\R{\mathbb{R}}
\def\B{\mathbf{B}}
\def\N{\mathbb{N}}
\def\B{\mathcal{B}}
\def\co{\mathrm{co}}
\def\int{\text{int} }
\def\proj{\text{proj}}
\begin{document}

\begin{frontmatter}
\title{Computing control invariant sets in high dimension is easy} 

\author[Gipsa]{Mirko Fiacchini}\ead{mirko.fiacchini@gipsa-lab.fr},    
\author[Gipsa]{Mazen Alamir}\ead{mazen.alamir@gipsa-lab.fr},   

\address[Gipsa]{Univ. Grenoble Alpes, CNRS, Gipsa-lab, F-38000 Grenoble, 
France.} 

\begin{keyword}                         
Invariance, computational methods, convex analysis
\end{keyword}

\begin{abstract}
In this paper we consider the problem of computing control invariant sets for 
linear controlled high-dimensional systems with constraints on the input and on 
the states. Set inclusions conditions for control invariance are presented that 
involve the N-step sets and are posed in form of linear programming problems. 
Such conditions allow to overcome the complexity limitation inherent to the set 
addition and vertices enumeration and can be applied also to high dimensional 
systems. The efficiency and scalability of the method are illustrated by 
computing approximations of the maximal control invariant set, based on the 
10-step operator, for a system whose state and input dimensions are 30 and 15, 
respectively.
\end{abstract}

\end{frontmatter}

\section{Introduction}

Invariance and contractivity of sets are central properties in modern control 
theory. Although the first important results on invariance date back to the 
beginning of the seventies \cite{BertsekasTAC72}, this topic gained 
considerable interest in the recent years, see in particular the works 
by Blanchini and coauthors \cite{BlanchiniTAC94,BlanchiniBook}, mainly due to 
its relation with constrained control and popular optimization-based control 
techniques as Model Predictive Control, see \cite{MayneAUT00}. 

Iterative procedures are given for the computation of control invariant sets 
that permit their practical implementation. Most of those procedures are 
substantially based on the one-step backward operator that associates to any set 
the states that can be steered into by an admissible input. Different algorithms 
based on the one-step operator exist for computing control invariants, that 
substantially differs from the initial set. For instance, if the algorithms are 
initialized with the state constraints set, 
\cite{BlanchiniTAC94,KerriganPHD01,RunggerTAC17}, the one-step operator 
generates a sequence of outer approximations of the maximal control invariant 
that converges to it under compactness assumptions, see \cite{BertsekasTAC72}. 
If, instead, the procedure is initialized with a control invariant set, a 
non-decreasing sequence of control invariant sets are obtained that converges 
from the inside to the maximal control invariant set, see the considerations on 
minimum-time ultimate boundedness problem in 
\cite{BlanchiniCDC92,BlanchiniBook}. A particular case, suggests to initialize 
the procedure with the set containing the origin only (which is a control 
invariant in the general framework), obtaining the sequence of $i$-step 
null-controllable sets, that are control invariant and converges to the maximal 
control invariant set, see \cite{GutTAC86,KeerthiTAC87,MayneAUT97,DarupCDC14}. 

Thus, although the abstract iterative procedures for obtaining control 
invariant sets apply also for nonlinear systems, 
\cite{FiacchiniAUT10,FiacchiniSCL12}, the practical computation of the 
one-step set, that is the basis for them, is often prohibitively complex for 
their application in high dimension even in the linear context. Some 
constructive approaches are based on Minkowski sum and projection procedure, as 
in \cite{KeerthiTAC87,BlanchiniCDC92,BlanchiniIJC95}, which are hardly 
applicable in high dimension due to their numerical complexity. Other methods 
are based on conditions involving the vertices of the sets under analysis,  
\cite{GutTAC86,LasserreAUT93,MayneAUT97,RakovicBaricTAC10}, but the vertices 
number may grow combinatorially with the space dimension and the vertices 
computation is hardly manageable in high dimension. The numerical complexity has 
also been addressed by considering linear feedback and ellipsoidal control 
invariant sets, see the monograph \cite{Boyd94}, or by fixing the polyhedral set 
complexity \cite{BlancoIJC10,AthanasopoulosIJC14,TahirTAC15}. 

In this paper we address the main problem related to the complexity of the 
N-step operator, for discrete-time deterministic controlled systems, with 
polyhedral constraints on the input and on the state. Considering polyhedral 
sets, such operator can be expressed in terms of Minkowski sum of polyhedra and 
then as an NP-complete problem \cite{Tiwary08Hardness}, hardly manageable in 
high dimension. An algorithm is presented for determining control invariant sets 
that is based on a set inclusion condition involving the N-step set of a 
polyhedron but does not require to explicitly compute the Minkowski sum nor to 
have the vertices representation of the sets. Such condition is posed as an LP 
feasibility problem, hence solvable even in high dimension. Examples that show 
the low conservatism and the high scalability of the approach are provided.

\paragraph*{Notations}
Denote with $\R_+$ the set of nonnegative real numbers. Given $n \in \N$, 
define $\N_n = \{x \in \N : 1 \leq x \leq n \}$. The $i$-th element of a 
finite set of matrices or vectors is denoted as $A_i$. Using the notation from 
\cite{RockafellarVariational}, given a mapping $M : \R^n \rightrightarrows 
\R^m$, its inverse mapping is denoted $M^{-1} : \R^n \rightrightarrows \R^m$. 
If $M$ is a single-valued linear mapping, we also denote, with slight abuse of 
notation, the related matrices $M \in \R^{n \times m}$ and, if $M$ is 
invertible, $M^{-1} \in \R^{m \times n}$. Given $a \in \R^n$ and $b \in \R^m$ 
we use the notation $(a, b) = [a^T \ b^T]^T \in \R^{n+m}$. The symbol $0$ 
denotes, besides the zero, also the matrices of appropriate dimensions whose 
entries are zeros and the origin of a vectorial space, its meaning being 
determined by the context. The symbol $\mathbf{1}$ denotes the vector of 
entries $1$ and $I$ the identity matrix, their dimension is determined by the 
context. The subset of $\R^n$ containing the origin only is $\{0\}$. The symbol 
$\oplus$ denotes the Minkowski set addition, i.e. given $C, D \subseteq \R^n$ 
then $C \oplus D = \{x + y \in \R^n: \ x \in C, \ y \in D\}$. To simplify the 
notation, the propositions involving the existential quantifier in the 
definition of sets are left implicit, e.g. $\{x \in A: f(x,y) \leq 0, \ y \in B 
\}$ means $\{x \in A: \ \exists y \in B \ \mathrm{ s.t. } f(x,y) \leq 0\}$. The 
unit box in $\R^n$ is denoted $\B^n$.

\section{Problem formulation and preliminary results}

The objective of this paper is to provide a constructive method to compute a 
control invariant set for controlled linear systems with constraints on the 
input and on the state. We would like to obtain a polytopic invariant set that 
could be computed through convex optimization problems. The main aim is to 
provide a method to obtain admissible control invariant sets for 
high-dimensional systems, thus no complex computational operations are supposed 
to be allowed.

The system is given by 
\begin{equation}\label{eq:system}
 x^+ = A x + B u
\end{equation}
where $x \in \R^n$ is the state and $u \in \R^m$ is the input, with constraints
\begin{equation}\label{eq:XU}
\begin{array}{l}
x \in X = \{y \in \R^n : \ F y\leq f\}, \ \ 
u \in U = \{v \in \R^m : \ G v \leq g\}.
\end{array}
\end{equation}

\begin{assumption}\label{ass:A}
The matrix $A$ is nonsingular. 
\end{assumption}

Assumption~\ref{ass:A}, not necessary but imposed here to easy the 
presentation, is not very restrictive. Recall for instance that every 
discretized linear system with no delay satisfies it. Anyway, the case of 
nonsingular $A$ is developed in \cite{FiacchiniArXiv17}.

Some basic properties and methods, well assessed in the literature, concerning 
control invariant sets are recalled hereafter. Consider the set $\Omega$ 
containing the origin, i.e. $0 \in \Omega$, and $Q_k(\Omega,U)$ defined as 
\begin{equation}\label{eq:Ok}
\begin{array}{l}
Q_k(\Omega,U) = \{x \in \R^n : \ A^k x + \sum_{i = 0}^{k-1} A^{k-1-i} B u_{k-i} 
\in \Omega, \\ 
\hspace{2cm} u_i \in U \ \forall i \in \N_k\}.
\end{array}
\end{equation}
The basic algorithm for obtaining a control invariant set consists in 
searching, given $\Omega$, for the minimal $N$ such that 
\begin{equation}\label{eq:invariance_cond1}
\Omega \subseteq \co \left(\bigcup_{k = 1}^{N} Q_k (\Omega,U)\right).
\end{equation}
As a matter of fact, all the $N$ for which (\ref{eq:invariance_cond1}) holds, 
lead to a control invariant set. Moreover, if (\ref{eq:invariance_cond1}) is 
satisfied, then it is satisfied for every $K \geq N$, leading to a 
non-decreasing sequence of nested control invariant sets.

Thus, the algorithm computes the preimages of $\Omega$ until the stop condition 
(\ref{eq:invariance_cond1}) holds and then all the states in 
\begin{equation}\label{eq:Omegainfty}
\bar{Q}_N(\Omega,U) = \co \left(\bigcup_{k = 1}^{N} Q_k (\Omega,U)\right)
\end{equation}
can be steered in $\Omega$, thus in $\bar{Q}_N(\Omega,U)$, in $N$ steps at 
most.

Given the initial set $\Omega$, a condition characterizing an invariant set, 
alternative to (\ref{eq:invariance_cond1}), is the following 
\begin{equation}\label{eq:invariance_condN}
\Omega \subseteq Q_N(\Omega,U),
\end{equation}
which is equivalent to the fact that every state in $Q_N(\Omega,U)$ can be 
steered in $\Omega$ in exactly $N$ steps. This means that 
(\ref{eq:invariance_condN}) implies, but is not equivalent to 
(\ref{eq:invariance_cond1}) and the resulting invariant set would be 
$\bar{Q}_N(\Omega,U)$ as in (\ref{eq:Omegainfty}). Condition 
(\ref{eq:invariance_condN}), which will be referred to as N-step condition in 
what follows, is just sufficient for (\ref{eq:invariance_cond1}) to hold but it 
does not require the computation of the convex hull of several sets at every 
iteration. 

Now, to obtain estimations of the maximal control invariant set contained 
in $X$, consider 
\begin{equation}\label{eq:OkX}
\begin{array}{l}
Q_k^x(\Omega,U,X) = \{x \in X : \ A^k x + \sum_{i = 0}^{k-1} A^{k-1-i} B 
u_{k-i} \in \Omega, \\
\hspace{1cm} A^j x + \sum_{i = 0}^{j-1} A^{j-1-i} B u_{j-i} \in X \ \ \forall j 
\in \N_k,\\
\hspace{1cm} u_i \in U \ \ \forall i \in \N_k\},
\end{array}
\end{equation}
that is the set of states $x \in X$ for which an admissible sequence of input 
of length $k$ exists driving the state in $\Omega$ in $k$ steps by maintaining 
the trajectory in $X$. The resulting control invariant set would then be given 
by 
\begin{equation}\label{eq:OmegainftyX}
\bar{Q}_N^x(\Omega,U,X) = \co \left(\bigcup_{k = 1}^{N} Q_k^x(\Omega,U,X) 
\right),
\end{equation}
provided that condition 
\begin{equation}\label{eq:invariance_condN_X}
\Omega \subseteq Q_N^x(\Omega,U,X)
\end{equation}
holds. Note that (\ref{eq:invariance_condN_X}) is just sufficient but, in 
general, less complex to be checked than $\Omega \subseteq 
\bar{Q}_N^x(\Omega,U,X)$.

\begin{remark}
The value of $N$ for which invariance condition (\ref{eq:invariance_condN}) and 
(\ref{eq:invariance_condN_X}) hold depends on the choice of $\Omega$. Clearly, 
if $\Omega$ is a control invariant set, then the conditions hold for all $N 
\geq 1$. Moreover, for every $\Omega$ there exists $\alpha > 0$ and $N \geq 1$ 
such that $\alpha \Omega$ satisfies (\ref{eq:invariance_condN}) or 
(\ref{eq:invariance_condN_X}), under mild stabilizability conditions.
\end{remark}

\subsection{Algorithm}

The main issue which impedes the application of the algorithm in high dimension 
is the fact that computing the Minkowski set addition is a complex operation, 
as it is an NP-complete problem, see \cite{Jones04,Tiwary08Hardness}. Moreover 
the addition leads to sets whose representation complexity increases. 
Considering, in fact, two polytopic sets $\Omega$ and $\Delta$, their sum has in 
general more facets and vertices than $\Omega$ and $\Delta$. Thus, the algorithm 
given above requires the computation of the Minkowski sum, hardly manageable in 
high dimension, and generates polytopes with an increasing number of facets and 
vertices. Another source of complexity is the convex hull in 
(\ref{eq:invariance_cond1}), (\ref{eq:Omegainfty}) or (\ref{eq:OmegainftyX}), as 
the explicit computation of the convex hull is a non-convex operation whose 
complexity grows exponentially with the dimension, see 
\cite{DeBerg2000computational}. Furthermore, also the vertices representation of 
the sets is a potential limitation for high dimensional systems, since the 
number of vertices may grow combinatorially with the dimension. Finally, 
approaches are provided, for instance in \cite{KeerthiTAC87,BlanchiniIJC95}, 
that require the computation of the projection of polytopes, operation whose 
complexity is equivalent to the one of Minkowski sum. As can be seen from the 
comparison, provided in \cite{Jones04}, between different projection 
algorithms, polytope projections are not suitable when projecting on high 
dimensions. This can be also heuristically checked by computing the projection 
over an $n$-dimensional subspace of a randomly generated $2n$-dimensional 
polytope. Using the MPT toolbox \cite{mpt}, for instance, we needed more than 
$40$ seconds to project a polytope from $\R^{10}$ into a $5$-dimensional 
subspace, more than $15$ minutes to project from $\R^{12}$ to $\R^{6}$.

The main objective of this paper is to design a method for testing conditions 
(\ref{eq:invariance_condN}) and (\ref{eq:invariance_condN_X}) and for having 
a, potentially implicit, representation of sets (\ref{eq:Omegainfty}) and 
(\ref{eq:OmegainftyX}) by means of convex optimization problems, then 
applicable also to relatively high dimensional systems, to obtain control 
invariant sets, avoiding the vertices representation of the sets and Minkowski 
sum or polytope projections computation.

\section{N-step condition for control invariance}

As noticed above, a first main issue is related to checking whether the sum of 
several polytopes contains a polytope, see the $N$-step stop condition 
(\ref{eq:invariance_condN}) and (\ref{eq:invariance_condN_X}). 

Consider first the N-step condition (\ref{eq:invariance_condN}), characterized 
by the Minkowski sum of several sets. The explicit definition of the Minkowski 
sum of sets could be avoided by employing its implicit representation. Indeed, 
given two polyhedral sets $\Gamma = \{x \in \R^m : H x \in h\}$ and $\Delta = 
\{y \in \R^p: G y \leq g\}$ and $P \in \R^{n \times m}$ and $T \in \R^{n \times 
p}$ we have that $P \Gamma \oplus T \Delta = \{x \in \R^n: x = P y + T z, \ Hy 
\leq h, \ Gz\leq g \}$. Thus, the explicit hyperplane or vertex representation 
of the sum can be replaced by the implicit one, given by the projection of a 
polyhedron in higher dimension. On the other hand, one might wonder if the stop 
condition $\Omega \subseteq Q_{N}(\Omega,U)$ could be checked without the 
explicit representation of $Q_{N}(\Omega,U)$. 

The first hint to do is that the inclusion condition is testable through a set 
of LP problems provided the vertices of $\Omega$ are available. Such an 
assumption is not very restrictive, since $\Omega$ is a design parameter that 
could be determined such that both the hyperplane and vertices representation 
should be available, a box for instance. Nevertheless, and since we are aiming 
at invariant sets for high dimensional systems, the use of vertices should be 
avoided if possible. Consider for instance, in fact, a system with $n = 20$. 
The unit box in $\R^{20}$ is characterized by 40 hyperplanes, but it has 
$2^{20} \simeq 10^6$ vertices. Then checking if it is contained in a set could 
require to solve more than a million of LP problems.

We consider then the possibility of testing whether a polyhedron is included in 
the sum of polyhedra by employing only their hyperplane representations and 
without the explicit representation of the sum of sets. The following result, 
based on the Farkas lemma and widely used on set theory and invariant methods 
for control, is useful for this purpose.

\begin{lemma}\label{lem:Farkas}
Two polyhedral sets $\Gamma = \{x \in \R^n : H x \leq h\}$, with $H \in \R^{p 
\times n}$, and $\Delta = \{x \in \R^n: G x \leq g\}$, with $G \in \R^{q \times 
n}$, satisfy $\Gamma \subseteq \Delta$ if and only if there exists a 
non-negative matrix $T \in \R^{q \times p}$ such that $T H = G$ and $T h \leq 
g$.
\end{lemma}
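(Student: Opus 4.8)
The plan is to reduce the matrix inclusion to $q$ independent scalar implications, one for each row of $G$, and then apply the affine (inhomogeneous) form of Farkas' lemma to each of them; the nonnegative rows returned by Farkas will be stacked to form the matrix $T$.

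First I would dispose of the sufficiency direction, which is elementary and needs no nonemptiness hypothesis. Suppose a nonnegative $T$ exists with $TH = G$ and $Th \leq g$. For any $x \in \Gamma$ we have $Hx \leq h$, and since the entries of $T$ are nonnegative, multiplying on the left preserves the inequality, so $THx \leq Th$. Using $TH = G$ and $Th \leq g$ this gives $Gx = THx \leq Th \leq g$, i.e. $x \in \Delta$. Hence $\Gamma \subseteq \Delta$.

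For the necessity direction I would exploit the equivalence of $\Gamma \subseteq \Delta$ with the family of statements ``for every $j \in \N_q$, the scalar inequality $G_j x \leq g_j$ holds on all of $\Gamma$'', where $G_j$ denotes the $j$-th row of $G$ and $g_j$ the $j$-th entry of $g$. Each such statement says that $G_j x \leq g_j$ is a consequence of the system $Hx \leq h$, which is precisely the hypothesis of the affine Farkas lemma: provided $\Gamma$ is nonempty, $G_j x \leq g_j$ holds throughout $\Gamma$ if and only if there is a nonnegative vector $t_j \in \R^p$ with $t_j^T H = G_j$ and $t_j^T h \leq g_j$. Stacking the rows $t_j^T$ into $T \in \R^{q \times p}$ then yields a nonnegative matrix satisfying $TH = G$ and $Th \leq g$, completing this direction.

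The step I expect to demand the most care is the consistency requirement hidden in the affine Farkas lemma: the necessity argument goes through only when $\Gamma \neq \emptyset$. This is not a mere technicality, since the equivalence genuinely fails otherwise --- for example, taking $H$ with rows $(1,0)$ and $(-1,0)$ and $h = (-1,-1)$ makes $\Gamma$ empty, yet no nonnegative combination of the rows of $H$ can reproduce a row such as $G = (0,1)$, so the right-hand side of the equivalence is infeasible while the inclusion holds vacuously. In the present paper this case never arises, since the sets to which the lemma is applied (the initial set $\Omega$ and its preimages) contain the origin and are therefore nonempty; I would record this as a standing hypothesis. The remaining ingredient, the affine Farkas lemma itself, I would either cite or obtain in one line from strong duality of the linear program $\max\{G_j x : Hx \leq h\}$ over the nonempty feasible region, whose finite optimum bounded by $g_j$ furnishes exactly the dual certificate $t_j \geq 0$.
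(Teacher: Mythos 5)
Your proof is correct, but there is nothing in the paper to compare it against: the paper states Lemma~\ref{lem:Farkas} without proof, presenting it as a known consequence of Farkas' lemma that is ``widely used on set theory and invariant methods for control''. Your argument is the standard one for this so-called extended Farkas lemma: sufficiency by left-multiplication of $Hx \leq h$ by the nonnegative matrix $T$, and necessity by treating each row $G_j$ separately, certifying the implication $\{Hx \leq h\} \Rightarrow \{G_j x \leq g_j\}$ via the affine Farkas lemma (equivalently, strong LP duality applied to $\max\{G_j x : Hx \leq h\}$), and stacking the resulting nonnegative row vectors $t_j^T$ into $T$. The most valuable part of your write-up is the observation that the equivalence, as stated in the paper, genuinely requires $\Gamma \neq \emptyset$: your counterexample with $H$ having rows $(1,0)$ and $(-1,0)$, $h = (-1,-1)$ and $G = (0,1)$ shows the inclusion can hold vacuously while no nonnegative $T$ with $TH = G$ exists, so the ``only if'' direction of the lemma is false for empty $\Gamma$. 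This is a real (if minor) imprecision in the paper's statement; it is harmless in context, exactly as you note, because every set to which the lemma is applied (namely $\bO$ and $M^{-1}\bar{\Omega}_N$ in Theorem~\ref{th:Ninclusion}, and their constrained counterparts) contains the origin by hypothesis. Recording nonemptiness of $\Gamma$ as a standing assumption, as you propose, is the right fix.
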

Consider now the stop condition (\ref{eq:invariance_condN}), which is suitable 
for applying the Lemma \ref{lem:Farkas}, as illustrated below. 

The main issue for applying Lemma \ref{lem:Farkas} is the fact that obtaining 
the explicit hyperplane representation of the set at right-hand side of 
(\ref{eq:invariance_condN}) is numerically hardly affordable, mainly in 
relatively high dimension. In fact, given two polyhedra $\Gamma \subseteq \R^m$ 
and $\Delta \subseteq \R^p$, to determine $L$ and $l$ such that $P \Gamma 
\oplus Q \Delta = \{x \in \R^n : \ L x \leq l\}$ is an NP-complete problem, see 
\cite{Tiwary08Hardness}. Nevertheless, a sufficient condition in form of LP 
feasibility problem is given below.

\begin{theorem}\label{th:Ninclusion}
Consider $\Omega = \{x \in \R^n: Hx \leq h\}$ and $U$ as in (\ref{eq:XU}), 
with $H \in \R^{n_h \times n}$ and $G \in \R^{n_g \times m}$, and suppose that 
$0 \in \Omega$ and $0 \in U$. Then the set $\bar{Q}_N(\Omega,U)$ as in 
(\ref{eq:Omegainfty}) is a control invariant set if there exist $T \in 
\R^{n_{\bar{g}} \times n_h}$ and $M \in \R^{\bar{n} \times \bar{n}}$, with 
$n_{\bg} = n_h+Nn_g$ and $\bar{n} = 
n+Nm$, such that 
\begin{equation}\label{eq:FinalCondLP}
\left\{\begin{array}{l}
T \bH = \bG M\\
T h \leq \bg\\
\left[\begin{array}{ccccc} I \ee 0 \ee 0 \ee \ldots \ee 0 \end{array}\right] =  
\left[\begin{array}{ccccc} I \ee 0 \ee 0 \ee \ldots \ee 0\end{array}\right] M
\end{array}\right.
\end{equation}
hold with 
\begin{equation}\label{eq:bGinv}
\begin{array}{l}
\bG = \left[\begin{array}{ccccc}
H A^N \ee H B \ee HAB \ee \ldots \ee HA^{N-1}B\\
0 \ee G \ee 0 \ee \ldots \ee 0\\
0 \ee 0 \ee G \ee \ldots \ee 0\\
 \ldots \ee \ldots  \ee \ldots  \ee \ldots \ee 
\ldots \\
0 \ee 0 \ee 0 \ee \ldots \ee G\\
\end{array}\right] \hspace{-0.1cm}, \ \
\bg = 
\left[\begin{array}{c}
h\\
g\\
g\\
\ldots\\
g
\end{array}\right]\\ 
\bH = \left[\begin{array}{cccccc}
H \ee 0 \ee 0 \ee \ldots \ee 0
\end{array}\right] 
\end{array}
\end{equation}
where $\bG \in \R^{n_{\bg} \times \bar{n}}$, $\bg \in \R^{n_{\bg}}$, and $\bH 
\in \R^{n_{h} \times \bar{n}}$.
\end{theorem}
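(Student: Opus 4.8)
The plan is to read the $N$-step condition (\ref{eq:invariance_condN}), $\Omega \subseteq Q_N(\Omega,U)$, as a projection containment and then to bypass the (NP-complete) projection by exhibiting a single linear lifting to which Lemma~\ref{lem:Farkas} applies directly.

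First I would express $Q_N(\Omega,U)$ as the image under the coordinate projection $\proj_x$ of the lifted polyhedron $\bO = \{\bx \in \R^{\bar{n}} : \bG \bx \le \bg\}$, where $\bx = (x,u_1,\dots,u_N)$ stacks the state and the $N$ candidate inputs. Inspecting $\bG,\bg$ in (\ref{eq:bGinv}) shows that the first block row is exactly $H(A^N x + \sum_{i=0}^{N-1} A^{N-1-i} B u_{N-i}) \le h$, i.e. the requirement that the $N$-step-ahead state lie in $\Omega$, while the remaining $N$ block rows read $G u_j \le g$, i.e. $u_j \in U$. Hence $\bG \bx \le \bg$ holds precisely when $(u_1,\dots,u_N)$ is an admissible sequence steering $x$ into $\Omega$ in $N$ steps, so that $Q_N(\Omega,U) = \proj_x \bO$. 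Likewise $\Omega$ lifts to the cylinder $\hO = \{\bx : \bH \bx \le h\}$, whose $\proj_x$ image is $\Omega$ and on which the input coordinates are unconstrained.

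The key observation I would then make is that $\Omega \subseteq Q_N(\Omega,U)$ follows as soon as some linear map $M$ sends $\hO$ into $\bO$ while fixing the state coordinate. Indeed, given such an $M$, any $x \in \Omega$ yields $\bx = (x,0,\dots,0) \in \hO$, hence $M\bx \in \bO$, and by the third line of (\ref{eq:FinalCondLP}), $[I\ 0\ \cdots\ 0] M = [I\ 0\ \cdots\ 0]$, the state part of $M\bx$ is still $x$; the input part of $M\bx$ is then an admissible witness, proving $x \in Q_N(\Omega,U)$. It remains to encode $M\hO \subseteq \bO$, which is the polyhedral inclusion $\{\bx : \bH\bx \le h\} \subseteq \{\bx : \bG M \bx \le \bg\}$. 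Applying Lemma~\ref{lem:Farkas} to this pair produces precisely a nonnegative $T$ with $T\bH = \bG M$ and $Th \le \bg$, i.e. the first two lines of (\ref{eq:FinalCondLP}): for every $\bx$ with $\bH\bx\le h$ one gets $\bG M\bx = T\bH\bx \le Th \le \bg$ using $T\ge 0$. Feasibility of (\ref{eq:FinalCondLP}) thus delivers $\Omega \subseteq Q_N(\Omega,U)$, which by the discussion in Section~2 implies (\ref{eq:invariance_cond1}) and hence the control invariance of $\bar{Q}_N(\Omega,U)$.

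The step I expect to be the crux is the passage from the genuine projection containment to the lifted linear inclusion $M\hO \subseteq \bO$: insisting on a fixed linear $M$ is strictly stronger than $\Omega \subseteq \proj_x\bO$, so only sufficiency can be claimed, which is all the theorem asserts. The reward is that the two inclusions collapse into the Farkas certificates (\ref{eq:FinalCondLP}), jointly linear in the unknowns $(T,M)$ since $\bG M$ is linear in $M$; the whole test is therefore an LP feasibility problem free of Minkowski sums, convex hulls and vertex enumeration. The only bookkeeping required is matching the block sizes $n_{\bg} = n_h + N n_g$ and $\bar{n} = n + Nm$ and keeping $T \ge 0$ as demanded by Lemma~\ref{lem:Farkas}.
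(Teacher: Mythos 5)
Your proof is correct and takes essentially the same route as the paper: the same lifting of $Q_N(\Omega,U)$ and $\Omega$ to polyhedra in $\R^{\bar{n}}$, the same state-preserving matrix $M$ encoded by the third line of (\ref{eq:FinalCondLP}), and the same application of Lemma~\ref{lem:Farkas} to produce the certificates $T\bH = \bG M$, $Th \le \bg$, $T \ge 0$. The only difference is presentational: you verify sufficiency pointwise by pushing the lifted point $(x,0,\dots,0)$ through $M$ and reading off the input components as a witness, whereas the paper states the identical inclusion as $\bar{\Omega} \subseteq M^{-1}\bar{\Omega}_{N}$ and argues via preservation of the $x$-projection under $M$ and its inverse mapping.
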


\begin{proof}
Consider condition (\ref{eq:invariance_condN}), sufficient for 
$\bar{Q}_N(\Omega,U)$ to be a control invariant set. The right-hand side term 
of (\ref{eq:invariance_condN}) is given by 
\begin{equation*}
\begin{array}{l}
\displaystyle Q_N(\Omega,U) = \{x \in \R^n : \, H A^N x + H B u_1  + H A^{1} 
Bu_2 + \ldots \\
\hspace{2cm} + H A^{N-1} B u_N  \leq h, \ G u_i \leq g, \ \forall i \in \N_N \}
\end{array}
\end{equation*}
and then it is the projection on $\R^n$ of the set 
\begin{equation*}\label{eq:OplusinvZ}
\begin{array}{l}
\bar{\Omega}_{N} = \{(x, u_{1}, u_{2}, \ldots, u_{N}) \in \R^{\bar{n}}: \ H A^N 
x + H B u_1 + H A B  u_2 \ldots \\
+ H A^{N-1}Bu_N \leq h, \ G u_i \leq g \ \forall i 
\in \N_N\} = \{\bx \in \R^{\bar{n}}: \ \bG \bx \leq \bg\},
\end{array}
\end{equation*}
with $\bx = (x, u_{1}, u_{2}, \ldots, u_{N}) \in \R^{\bar{n}}$ and $\bG$ and 
$\bg$ as in (\ref{eq:bGinv}). The set $\Omega$ is the projection on $\R^n$ of 
the set 
\begin{equation*}
\begin{array}{cl}
\bO \hspace{-0.1cm} & = \{(x, u_{1}, u_{2}, \ldots, u_{N}) \in \R^{\bar{n}}: \ 
H x \leq h\} \\
& = \{\bx \in \R^{\bar{n}}: \ \bH \bx \leq h\} \subseteq \R^{\bar{n}}
\end{array}
\end{equation*}
with $\bH$ as in (\ref{eq:bGinv}). Thus, condition (\ref{eq:invariance_condN}) 
is equivalent to
\begin{equation}\label{eq:projCNS}
\proj_x \bO \subseteq \proj_x \bO_N,
\end{equation}
since $\Omega = \proj_x \bO $ and $Q_N = \proj_x \bO_N$.
Thus, to prove that $\Omega \subseteq Q_N(\Omega,U)$ is equivalent to check 
whether the projection of $\bar{\Omega}_{N}$ on $\R^n$ contains the 
projection of $\bO$. Unfortunately, condition (\ref{eq:projCNS}) is not 
suitable for using Lemma \ref{lem:Farkas} and then we search for a sufficient 
condition for (\ref{eq:projCNS}) to hold such that the lemma can be applied 
directly. 

Consider any linear single-valued mapping $M : \R^{\bar{n}} \rightrightarrows 
\R^{\bar{n}}$, characterized by a, possibly non-invertible, matrix $M \in 
\R^{\bar{n} \times \bar{n}}$, such that the value of $x$ through $M$ is 
preserved, i.e. $\proj_x M((x, u_1, \ldots, u_N)) = x$ for all $(x, u_1, 
\ldots, u_N) \in \R^{\bar{n}}$. Clearly, the value of $x$ is preserved also 
through the inverse mapping of $M$, that is $\proj_x M^{-1}((x, u_1, \ldots, 
u_N)) = x$ for all $(x, u_1, \ldots, u_N) \in \R^{\bar{n}}$. This means that 
$\proj_x \bar{\Omega}_{N} = \proj_x M^{-1} \bar{\Omega}_{N}$ and 
then (\ref{eq:projCNS}) is equivalent to 
\begin{equation}\label{eq:projCNS2}
\proj_x \bar{\Omega} \subseteq \proj_x M^{-1} \bar{\Omega}_{N}.
\end{equation}
Then, the existence of $M$ preserving the $x$ and such that 
\begin{equation}\label{eq:projCNS3}
\bar{\Omega} \subseteq M^{-1} \bar{\Omega}_{N}
\end{equation}
holds, is a sufficient condition for (\ref{eq:projCNS2}), and thus also for 
(\ref{eq:projCNS}), to be satisfied. Notice that necessity of 
(\ref{eq:projCNS3}) for (\ref{eq:projCNS2}) is not straightforward, since 
$\proj_x \Gamma \subseteq \proj_x \Delta$ does not imply $\Gamma 
\subseteq \Delta$, in general. 

The condition on the matrix $M$ such that $\proj_x M((x, u_1, \ldots, u_N)) = 
x$ for all $(x, u_1, \ldots, u_N) \in \R^{\bar{n}}$ is 
\begin{equation}\label{eq:N}
\left[\begin{array}{cccc} I \ee 0 \ee \ldots \ee 0 \end{array}\right] =  
\left[\begin{array}{cccc} I \ee 0 \ee \ldots \ee 0 \end{array}\right] M
\end{equation}
and then, from Lemma \ref{lem:Farkas}, it follows that conditions 
(\ref{eq:projCNS3}) and (\ref{eq:N}) are equivalent to the existence of $T \in 
\R^{n_{\bar{g}} \times n_{\bar{h}}}$ and $M \in \R^{(\bar{n}) \times 
(\bar{n})}$ 
satisfying (\ref{eq:FinalCondLP}). Then (\ref{eq:FinalCondLP}) is a sufficient 
condition for $\Omega \subseteq Q_N(\Omega,U)$.
\end{proof}

The result given above can be directly extended to the problem in presence of 
constraints on the state.

\begin{theorem}\label{th:NinclusionX}
Consider $\Omega = \{x \in \R^n: Hx \leq h\}$ and $X$ and $U$ as in 
(\ref{eq:XU}), with $H \in \R^{n_h \times n}$, $F \in \R^{n_f \times n}$ $G \in 
\R^{n_g \times m}$, and suppose that $0 \in \Omega$, $0 \in X$ and $0 \in U$. 
Then the set $\bar{Q}_N^x(\Omega,U,X)$ as in (\ref{eq:OmegainftyX}) is a 
control invariant set contained in $X$ if there exist $T \in \R^{n_{\bar{g}} 
\times n_h}$ and $M \in \R^{\bar{n} \times \bar{n}}$, with $n_{\bg} = n_h + N 
n_g + N n_f$ and $\bar{n} = n+Nm$, such that (\ref{eq:FinalCondLP}) holds with 
\begin{equation}\label{eq:bGinvX}
\begin{array}{l}
\bG = \left[\begin{array}{ccccc}
H A^N \ee H B \ee HAB \ee \ldots \ee HA^{N-1}B\\
0 \ee G \ee 0 \ee \ldots \ee 0\\
0 \ee 0 \ee G \ee \ldots \ee 0\\
 \ldots \ee \ldots  \ee \ldots  \ee \ldots \ee \ldots \\
0 \ee 0 \ee 0 \ee \ldots \ee G\\
\hline
F A^N \ee FB \ee FAB \ee \ldots \ee FA^{N-1}B\\
F A^{N-1} \ee 0 \ee FB \ee \ldots \ee FA^{N-2}B\\
 \ldots \ee \ldots  \ee \ldots  \ee \ldots \ee \ldots \\
F A \ee 0 \ee 0 \ee \ldots \ee FB\\
F \ee 0 \ee 0 \ee \ldots \ee 0\\
\end{array}\right] \hspace{-0.1cm}, \ \
\bg = 
\left[\begin{array}{c}
h\\
g\\
g\\
\ldots\\
g\\
\hline
f\\
f\\
\ldots\\
f\\
f
\end{array}\right]\\ 
\bH = \left[\begin{array}{cccccc}
H \ee 0 \ee 0 \ee \ldots \ee 0
\end{array}\right] 
\end{array}
\end{equation}
where $\bG \in \R^{n_{\bg} \times \bar{n}}$, $\bg \in \R^{n_{\bg}}$, and $\bH 
\in \R^{n_{h} \times \bar{n}}$.
\end{theorem}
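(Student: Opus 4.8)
The plan is to follow the argument of Theorem~\ref{th:Ninclusion} essentially verbatim, the only novelty being that the lifted polyhedron describing the set to be contained must now also encode the requirement that the whole trajectory remain in $X$. First I would recall that condition (\ref{eq:invariance_condN_X}), namely $\Omega \subseteq Q_N^x(\Omega,U,X)$, is sufficient for $\bar{Q}_N^x(\Omega,U,X)$ in (\ref{eq:OmegainftyX}) to be control invariant, exactly as discussed in the preliminary results. Hence it suffices to reduce (\ref{eq:invariance_condN_X}) to the feasibility problem (\ref{eq:FinalCondLP}) with the augmented data $\bG$, $\bg$ of (\ref{eq:bGinvX}), and then to treat the containment in $X$ separately.

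The key bookkeeping step is to exhibit $Q_N^x(\Omega,U,X)$ as the projection on $\R^n$ of the lifted polyhedron $\bar{\Omega}_N = \{\bx \in \R^{\bar{n}} : \bG \bx \leq \bg\}$, with $\bx = (x,u_1,\ldots,u_N)$. Reading off the blocks of (\ref{eq:bGinvX}): the top row block $HA^N x + \sum_{l=1}^N HA^{l-1}Bu_l \leq h$ expresses that the state reached after $N$ steps lies in $\Omega$; the $N$ diagonal blocks $Gu_i \leq g$ express $u_i \in U$; and each remaining block is of the form $Fx \leq f$ with $x$ replaced by the state reached after the corresponding number of steps (a term $A^s x$ plus shifted input terms $A^{p}Bu_l$), thereby imposing membership in $X$ at every step along the trajectory, including the initial and the final state, as required by (\ref{eq:OkX}). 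I expect the genuinely error-prone part of the proof to be verifying that the shifting pattern of the powers of $A$ in the $F$-blocks matches the trajectory of (\ref{eq:OkX}); once this identification is settled, nothing further about the state constraints is needed and the rest is mechanical.

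From here the reduction is identical to Theorem~\ref{th:Ninclusion}. The set $\Omega$ is the projection on $\R^n$ of $\bO = \{\bx \in \R^{\bar{n}} : \bH\bx \leq h\}$ with $\bH$ as in (\ref{eq:bGinvX}), so (\ref{eq:invariance_condN_X}) reads $\proj_x \bO \subseteq \proj_x \bar{\Omega}_N$. Introducing any linear $M \in \R^{\bar{n} \times \bar{n}}$ that preserves the $x$-component, i.e.\ satisfying (\ref{eq:N}), one has $\proj_x \bar{\Omega}_N = \proj_x M^{-1}\bar{\Omega}_N$, so that $\bO \subseteq M^{-1}\bar{\Omega}_N$ is sufficient for the projected inclusion (necessity failing in general, as already noted for Theorem~\ref{th:Ninclusion}). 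Since $M^{-1}\bar{\Omega}_N = \{\bx : \bG M \bx \leq \bg\}$, applying Lemma~\ref{lem:Farkas} to $\{\bx : \bH\bx \leq h\} \subseteq \{\bx : \bG M \bx \leq \bg\}$ yields a nonnegative $T$ with $T\bH = \bG M$ and $Th \leq \bg$; together with the $x$-preservation constraint (\ref{eq:N}) these are precisely the three relations of (\ref{eq:FinalCondLP}) for the augmented data (\ref{eq:bGinvX}).

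Finally I would dispatch the containment claim cheaply and separately: each $Q_k^x(\Omega,U,X)$ is by its very definition a subset of $X$, and $X$ is a polyhedron, hence convex, so the convex hull $\bar{Q}_N^x(\Omega,U,X)$ in (\ref{eq:OmegainftyX}) is again contained in $X$. Combining this with the control invariance obtained from (\ref{eq:invariance_condN_X}) above completes the statement.
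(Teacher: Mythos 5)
Your proposal is correct and follows exactly the route the paper intends: its own proof of Theorem~\ref{th:NinclusionX} is a one-line remark that the argument is ``analogous to Theorem~\ref{th:Ninclusion}'', and you have simply carried out that analogy in full---identifying $Q_N^x(\Omega,U,X)$ as the projection of the lifted polyhedron with the augmented $\bG$, $\bg$ of (\ref{eq:bGinvX}), then reusing the $x$-preserving map $M$, the inverse-image inclusion, and Lemma~\ref{lem:Farkas}. Your closing observation that $\bar{Q}_N^x(\Omega,U,X) \subseteq X$ because each $Q_k^x(\Omega,U,X) \subseteq X$ and $X$ is convex is a correct (and welcome) explicit treatment of a point the paper leaves implicit.
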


\begin{proof}
Condition (\ref{eq:FinalCondLP}) with (\ref{eq:bGinvX}) can be proved to imply 
the constrained invariant condition (\ref{eq:invariance_condN_X}) by reasonings 
analogous to those of Theorem~\ref{th:Ninclusion}.
\end{proof}

Given the sets $\Omega,U$ and $X$, to obtain the greatest multiple of 
$\Omega$ such that (\ref{eq:invariance_condN_X}) holds, that is the greatest 
$\alpha \in \R$ such that 
\begin{equation}\label{eq:invariance_cond2}
\alpha \Omega \subseteq Q_N^{x}(\alpha \Omega,U,X),
\end{equation}
is equivalent to compute the smallest nonnegative $\beta$, with $\beta = 
\alpha^{-1}$, such that 
\begin{equation*}\label{eq:invariance_cond3Z}
\Omega \subseteq Q_N^x(\Omega,\beta U, \beta X).
\end{equation*}
This consists in replacing $g$ with $\beta g$ in (\ref{eq:bGinvX}) and leads to 
the following LP problem in $T$, $M$ and $\beta$ 
\begin{equation}\label{eq:FinalCondLPbeta}
\begin{array}{l}
\hspace{-0.5cm} \alpha^{-1} = \beta = \min_{\beta \in \R_+} \gamma\\
\hspace{1.3cm} \mathrm{s.t. } \ \ T \bH = \bG M\\
\hspace{2.0cm} T h \leq \gamma \hat{g} + \tilde{g}\\
\hspace{2.0cm} \left[\begin{array}{ccccc} I \ee  0 
\ee 0 \ee  \ldots 
\ee 0 \end{array}\right] =  
\left[\begin{array}{ccccc} I \ee  0 
\ee 0 \ee  \ldots 
\ee 0\end{array}\right] M
\end{array}
\end{equation}
with $\hat{g} = (0, \, g, \, g, \,  \ldots, \, g, \, f, \,  \ldots, \, f, \, 
f)$ and $\tilde{g} = (h, \, 0, \, 0, \, \ldots, \, 0)$.
\begin{remark}
 Clearly, if $\Omega$ is a control invariant set, then the greatest $\alpha$ 
satisfying (\ref{eq:invariance_cond2}) is not smaller than $1$.
\end{remark}

Note that directly maximizing $\alpha$ would yield to replace $h$ by $\alpha h$ 
in (\ref{eq:FinalCondLP}) and (\ref{eq:bGinvX}) and then to a nonlinear 
optimization problem. Analogous computational considerations hold for the case 
of absence of state constraints, as in Theorem~\ref{th:Ninclusion}, which is a 
particular case of Theorem \ref{th:NinclusionX} with $X = \R^n$.

\section{State inclusion test}

In the previous section, a condition for (\ref{eq:invariance_condN_X}) to hold 
is given that does not require the computation of the preimage sets 
$Q_N^x(\Omega,U,X)$, then avoiding the computation of Minkowski addition, see 
Theorem \ref{th:NinclusionX}. Once $\alpha \Omega$ is computed by solving 
(\ref{eq:FinalCondLPbeta}), one possible choice to obtain a control invariant 
set is given by 
\begin{equation}\label{eq:OmegaNx}
\bar{\Omega}^x = \co\left(\bigcup_{k = 1}^N \Omega_k^x \right) \ \ \mathrm{ 
with } \ \ \Omega_k^x = Q_k^x(\alpha \Omega, U, X).
\end{equation}
To have an explicit representation of $\bar{\Omega}^x$ requires to compute 
the convex hull of the union of several sets, each one given by the Minkowski 
sum of sets, but the convex hull operation is numerically demanding. Hereafter 
we provide a convex condition to check if a given $x \in \R^n$ belongs to the 
invariant set $\bar{\Omega}^x$ without computing it explicitly.

The theorem below provides a representation of the control invariant 
$\bar{\Omega}^x$ in terms of linear equalities and inequalities.

\begin{theorem}\label{th:theoremOinf}
Let Assumption \ref{ass:A} hold. Consider $\Omega = \{x \in \R^n: Hx \leq h\}$ 
bounded, $X$ and $U$ as in (\ref{eq:XU}), with $H \in \R^{n_h \times n}$, $F \in 
\R^{n_f \times n}$, $G \in \R^{n_g \times m}$, and suppose that $0 \in \Omega$, 
$0 \in X$, $0 \in U$ and $U$ is bounded. Given $\alpha$ solution of 
(\ref{eq:FinalCondLPbeta}) then the set $\bar{\Omega}^x$ defined by 
(\ref{eq:OmegaNx}) can be written as follows
\begin{equation}\label{eq:Inv_for_Anonsin}
\begin{array}{l}
\bar{\Omega}^x = \{ x \in \R^n: \ x = \sum_{k = 1}^N z_k; \\ 
\hspace{0.5cm}H A^k z_k + \sum_{i = 0}^{k-1} H A^{k-1-i} B v_{k,k-i} \leq 
\alpha \lambda_k h, \ \forall k \in \N_N;\\
\hspace{0.5cm} F A^j z_k + \sum_{i = 0}^{j-1} F A^{j-1-i} B v_{k,j-i} 
\leq \lambda_k f, \ \forall j \in \N_k, \ \forall k \in \N_N;\\
\hspace{0.5cm} F z_k \leq \lambda_k f, \ \forall k \in \N_N; \ \ \ G v_{k,i} 
\leq \lambda_k g \ \forall i \in \N_k, \ \forall k \in 
\N_N;\\
\hspace{0.5cm}\lambda \geq 0, \ \sum_{k =1}^N \lambda_k = 1\}.
\end{array}
\end{equation}
\end{theorem}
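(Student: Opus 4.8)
The plan is to show that the set defined by the big parametric system in (\ref{eq:Inv_for_Anonsin}) coincides with $\bar{\Omega}^x = \co\bigl(\bigcup_{k=1}^N \Omega_k^x\bigr)$, where each $\Omega_k^x = Q_k^x(\alpha\Omega, U, X)$. First I would obtain an explicit description of each individual preimage set $\Omega_k^x$ by unrolling definition (\ref{eq:OkX}): a point $w \in \Omega_k^x$ iff there exist inputs $u_1,\dots,u_k \in U$ with $A^k w + \sum_{i=0}^{k-1} A^{k-1-i} B u_{k-i} \in \alpha\Omega$, the intermediate states in $X$, and $w \in X$. Written out with the matrices $H$, $F$, $G$, this says $H A^k w + \sum_{i=0}^{k-1} H A^{k-1-i} B u_{k-i} \leq \alpha h$, together with the state-trajectory inequalities $FA^j w + \sum_{i=0}^{j-1} FA^{j-1-i} B u_{j-i} \leq f$ for $j \in \N_k$, the terminal constraint $Fw \leq f$, and $Gu_i \leq g$ for all $i \in \N_k$. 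This is exactly the $k$-th block of (\ref{eq:Inv_for_Anonsin}) once one sets $\lambda_k = 1$ and the other $\lambda$'s to zero.

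Next I would invoke the standard characterization of the convex hull of a union of sets in terms of a simplex combination: a point $x$ lies in $\co\bigl(\bigcup_{k=1}^N \Omega_k^x\bigr)$ if and only if $x = \sum_{k=1}^N \lambda_k w_k$ with $w_k \in \Omega_k^x$, $\lambda_k \geq 0$, and $\sum_k \lambda_k = 1$. The key step is then the homogenization trick: I would substitute $z_k = \lambda_k w_k$ and, for the input associated to the $k$-th term, $v_{k,i} = \lambda_k u_i$. Scaling each defining inequality of $\Omega_k^x$ by the nonnegative scalar $\lambda_k$ turns the constraint $H A^k w_k + \sum_i H A^{k-1-i} B u_{k-i} \leq \alpha h$ into $H A^k z_k + \sum_i H A^{k-1-i} B v_{k,k-i} \leq \alpha\lambda_k h$, and likewise for the $F$- and $G$-inequalities, producing precisely the constraint blocks of (\ref{eq:Inv_for_Anonsin}). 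Thus every point of $\bar{\Omega}^x$ admits variables $(z_k, v_{k,i}, \lambda_k)$ satisfying the system, giving the inclusion of $\bar{\Omega}^x$ in the right-hand side.

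For the reverse inclusion I would take any $(x, z_k, v_{k,i}, \lambda)$ feasible for (\ref{eq:Inv_for_Anonsin}) and recover the convex combination. On the indices $k$ with $\lambda_k > 0$, dividing the $k$-th block by $\lambda_k$ and setting $w_k = z_k/\lambda_k$, $u_{k,i} = v_{k,i}/\lambda_k$ exactly reconstructs a point $w_k \in \Omega_k^x$, so $x = \sum_{\lambda_k>0} \lambda_k w_k$ exhibits $x$ as a convex combination of points in $\bigcup_k \Omega_k^x$. The main obstacle I anticipate is the boundary case $\lambda_k = 0$: here division is illegitimate, and the scaled inequalities force $H A^k z_k + \sum_i H A^{k-1-i} B v_{k,k-i} \leq 0$ with $G v_{k,i} \leq 0$, i.e. $z_k$ and $v_{k,i}$ lie in the recession cones of the corresponding sets. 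This is where boundedness of $\Omega$ and $U$ (hypothesized in the theorem, and supported by Assumption~\ref{ass:A} making $A^N$ invertible) is essential: a bounded polyhedron containing the origin has trivial recession cone, so $G v_{k,i} \leq 0$ forces $v_{k,i}=0$, and then the $\Omega$-inequality with $\alpha\lambda_k h = 0$ together with invertibility of $A^N$ forces $z_k = 0$, so the degenerate terms contribute nothing to $x = \sum_k z_k$ and the convex-combination representation survives. Verifying carefully that these recession-cone terms vanish, and that the surviving combination genuinely has weights summing to one, is the delicate point of the argument.
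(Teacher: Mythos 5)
Your proposal is correct and follows essentially the same route as the paper's own proof: both express $\co\bigl(\bigcup_{k=1}^N \Omega_k^x\bigr)$ via convex combinations (the paper phrases this as a union of Minkowski sums of scaled sets plus an appendix lemma on the index sets where $\lambda_k>0$, which is equivalent bookkeeping), then homogenize with $z_k = \lambda_k w_k$, $v_{k,i} = \lambda_k u_{k,i}$, and finally dispose of the degenerate indices $\lambda_k = 0$ exactly as you do, using boundedness of $U$ and $\Omega$ and nonsingularity of $A$ to force $v_{k,i}=0$ and $z_k=0$. Your identification of the $\lambda_k=0$ recession-cone issue as the delicate point is precisely the step the paper's proof also singles out, so no gap remains.
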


\begin{proof}
Given an arbitrary collection of non-empty convex sets $\Gamma_i \subseteq 
\R^n$ 
with $I \in \N$ and $i \in \N_I$, note first that 
\begin{equation*}
\begin{array}{l}
\co\left(\bigcup_{i \in \N_I} \Gamma_i \right) = \hspace{-0.2cm} 
\bigcup_{\substack{\lambda \geq 0 \\ 
\mathbf{1}^T \lambda = 1}} \hspace{-0.2cm} \left( \bigoplus_{i \in \N_I} 
\lambda_i \Gamma_i \right)
\end{array}
\end{equation*}
see Chapter 3 in \cite{Rockafellar}. Provided condition 
(\ref{eq:invariance_cond2}) is satisfied and from Lemma~\ref{lem:lambdaK} in 
Appendix \ref{app:A}, the control invariant set is given by 
\begin{equation}
\begin{array}{l}
 \bar{\Omega}^x \displaystyle = \co \left(\bigcup_{k = 1}^{N} \Omega_k^x 
\right) = \bigcup_{\substack{\lambda \geq 0 \\ \mathbf{1}^T \lambda = 1}} 
\left( \bigoplus_{k = 1}^N \lambda_k \Omega_k^x \right) \\
= \bigcup_{K \subseteq \N_N} \bigcup_{\substack{\lambda \in \Lambda(K)  
\\ \mathbf{1}^T \lambda = 1}} \Bigg( \bigoplus_{k \in K} \lambda_k \Omega_k^x 
\Bigg) = \{ x \in \R^n: \ x = \sum_{k \in K} \lambda_k y_k; \\ y_k \in 
\Omega_k^x, \ \lambda_k > 0, \ \ \forall k \in K; \ \sum_{k \in K} 
\lambda_k = 1, \ \forall K \subseteq \N_N\},
\end{array}
\end{equation}
since $0 \cdot \Omega_k^x = \{0\}$ for all $k \in \N_n$. Then, from $\lambda_k > 
0$ for every $k \in K$ and defining $z_k = \lambda_k y_k$ for all $k \in K$, it 
follows 
\begin{equation}
\begin{array}{l}
 \bar{\Omega}^x = \{ x \in \R^n: \ x = \sum_{k \in K} z_k; \ z_k/\lambda_k \in 
\Omega_k^x, \ \lambda_k > 0, \ \ \forall k \in K; \\ 
\hspace{0.25cm} \sum_{k \in K} \lambda_k = 1, \ \forall K \subseteq \N_N\} = \{ 
x \in \R^n: \ x = \sum_{k \in K} z_k; \\ 
\hspace{0.25cm}H A^k z_k/\lambda_k + \sum_{i = 0}^{k-1} H A^{k-1-i} B u_{k,k-i} 
\leq \alpha h, \ \forall k \in K;\\
\hspace{0.25cm} F A^j z_k/\lambda_k + \sum_{i = 0}^{j-1} F A^{j-1-i} B 
u_{k,j-i} \leq f, \ \forall j \in \N_k, \ \forall k \in K;\\
\hspace{0.25cm} F z_k/\lambda_k  \leq f, \ \forall k \in K; \ \ \ G 
u_{k,i} \leq g \ \forall i \in \N_k, \ \forall k \in K;\\
\hspace{0.25cm}\lambda_k > 0, \ \ \forall k \in K; \ \ \sum_{k \in K} \lambda_k 
= 1, \ \forall K \subseteq \N_N\}\\ 
\end{array}
\end{equation}
from (\ref{eq:OkX}). By introducing $v_{k,i} = \lambda_k u_{k,i}$ for all $i 
\in \N_k$, $k \in K$ and $K \subseteq \N_N$, then 
\begin{equation}\label{eq:Omegaconvex}
\begin{array}{l}
 \bar{\Omega}^x  = \{ x \in \R^n: \ x = \sum_{k \in K} z_k; \\ 
\hspace{0.25cm}H A^k z_k + \sum_{i = 0}^{k-1} H A^{k-1-i} B v_{k,k-i} \leq 
\alpha \lambda_k h, \ \forall k \in K;\\
\hspace{0.25cm} F A^j z_k + \sum_{i = 0}^{j-1} F A^{j-1-i} B v_{k,j-i} 
\leq \lambda_k f, \ \forall j \in \N_k, \ \forall k \in K;\\
\hspace{0.25cm} F z_k \leq \lambda_k f, \ \forall k \in K; \ \ \ G 
v_{k,i} \leq \lambda_k g \ \forall i \in \N_k, \ \forall k \in 
K;\\
\hspace{0.25cm}\lambda_k > 0, \ \ \forall k \in K; \ \ \ 
\sum_{k \in K} \lambda_k = 1, \ \forall K \subseteq \N_N\}.
\end{array}
\end{equation}
If $\lambda_k = 0$, as for all $k \notin K$ and every $K \subseteq \N_N$, then 
$G v_{k,i} \leq \lambda_k g$ implies $v_{k,i} = 0$ for all $i \in \N_k$, from 
the boundedness of $U$. Thus $\lambda_k = 0$ implies also $z_k = 0$, from the 
boundedness of $\Omega$ and Assumption~\ref{ass:A}. Hence the expression 
(\ref{eq:Inv_for_Anonsin}) can be recovered by posing $\lambda_k = 0, v_{k,i} = 
0$ and $z_k = 0$ in (\ref{eq:Omegaconvex}) for all $k \notin K$ and every $K 
\subseteq \N_N$.
\end{proof}

Theorem~\ref{th:theoremOinf} implies that checking if $x \in 
\bar{\Omega}^x$ resorts to solve an LP feasibility problem in the 
variables $x, z_k, v_{k,i}, \lambda_k$ for all $i \in \N_k$ and $k \in \N_N$, 
then in a space of dimension $n + Nn + 0.5 N(N+1) m + N$. Such a representation 
is particularly suitable to be used in optimization-based control, as model 
predictive control for instance, since it reduces to enforcing the linear 
constraints characterizing $\bar{\Omega}^x$.

\section{Numerical examples}

The different results presented in this paper are illustrated through numerical 
examples. The optimization problems are solved using YALMIP interface 
\cite{Lofberg2004} and Mosek optimizer \cite{mosek} on an 
Intel\textregistered{} Core\texttrademark{} i7-6600U CPU @ 2.60GHz \texttimes{} 
4 processor laptop with 16GB of RAM.

\subsection{Example 1}

Here we compare the computational burden required to check the invariant 
condition (\ref{eq:invariance_cond2}) by solving 
(\ref{eq:bGinvX})-(\ref{eq:FinalCondLPbeta}) with an alternative approach based 
on known properties of computational geometry. First we describe this approach. 
Suppose that both the hyperplanes and the vertices representation of $\Omega$ 
are available. This assumption, not needed for our method that only requires 
the H-representation, could be reasonably posed since $\Omega$ can be arbitrary 
chosen, and then fixed to be the unitary box, i.e. $\Omega = \B^n$. Then, the 
$2^n$ vertices can be easily obtained, which is not the case for general 
polytopes. Thus, the exact maximal $\alpha$ such that 
(\ref{eq:invariance_cond2}) is satisfied is given by
\begin{equation}\label{eq:alternativemethod}
\begin{array}{l}
\alpha^* = \max_{\substack{\alpha, \ u_{i,j}}} \alpha\\
\hspace{0.5cm} \mathrm{s.t. } \ \ \bG \cdot (\alpha v_j, \, u_{1,j}, \, \ldots, 
\, u_{N,j}, ) \leq \alpha \tilde{g} + \hat{g}, \ \ \forall j \in \N_{2^n}
\end{array}
\end{equation}
where $v_j$ is the $j$-th vertex, with $j \in \N_{2^n}$, and $\bG, \tilde{g}$ 
and $\hat{g}$ are defined in and below Theorem~\ref{th:NinclusionX}. The 
constraints in the LP problem (\ref{eq:alternativemethod}) impose that every 
vertex of $\alpha \Omega$ is contained in $Q_N^x(\alpha \Omega, U, X)$ and 
their 
number is equal to the number of vertices of $\Omega$, hence 
exponentially growing with the system dimension. 

The exact maximal $\alpha^*$ solution of (\ref{eq:alternativemethod}) and the 
$\alpha$ obtained by solving (\ref{eq:FinalCondLPbeta}) with (\ref{eq:bGinvX}) 
are computed for randomly generated controllable systems with real eigenvalues 
with increasing $n$ and $m = \lceil n/2 \rceil$. The sets are $\Omega = \B^n, 
U = 10 \B^m$ and $X = 100 \B^n$ and $N = 2$. The computation times are given in 
Figure~\ref{fig:Ex1_1} in function of the state dimension $n$.

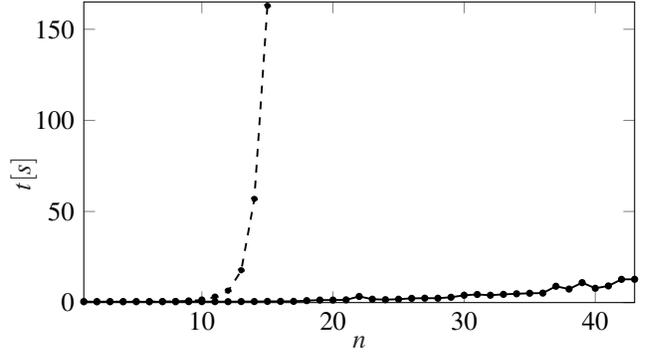
\begin{figure}[H]
\begin{center}
    \newlength\fheight
    \newlength\fwidth
    \setlength\fheight{4cm}
    \setlength\fwidth{7.7cm}
%
%
\begin{tikzpicture}

\begin{axis}[%
width=0.951\fwidth,
height=\fheight,
at={(0\fwidth,0\fheight)},
scale only axis,
xmin=1,
xmax=43,
xlabel style={at={(axis description cs:0.5,0.05)}, anchor=north, 
font=\color{white!15!black}},
xlabel={$n$},
ymin=0,
ymax=165,
ylabel style={at={(axis description cs:0.1,0.43)}, anchor=south, 
font=\color{white!15!black}},
ylabel={$t [s]$},
axis background/.style={fill=white}
]
\addplot [mark=*, mark size=1pt, color=black, line width=0.25mm, forget plot]
  table[row sep=crcr]{%
1	0.462802\\
2	0.426686\\
3	0.463526\\
4	0.429506\\
5	0.469741\\
6	0.438571\\
7	0.468756\\
8	0.466463\\
9	0.528841\\
10	0.524614999999997\\
11	0.533923999999999\\
12	0.503406000000005\\
13	0.576425999999998\\
14	0.574491000000002\\
15	0.656109999999998\\
16	0.665360999999997\\
17	0.680012999999995\\
18	1.078272\\
19	1.38209\\
20	1.372108\\
21	1.498352\\
22	3.368542\\
23	1.862068\\
24	1.60081899999999\\
25	1.855628\\
26	2.317049\\
27	2.416777\\
28	2.391974\\
29	2.918479\\
30	4.023089\\
31	4.456763\\
32	4.0272\\
33	4.520795\\
34	4.826656\\
35	5.11983\\
36	5.203598\\
37	8.955385\\
38	7.418312\\
39	10.941483\\
40	7.856905\\
41	9.136984\\
42	12.74389\\
43	12.785738\\
};
\addplot [mark=*, mark size=1pt, color=black, dashed, line width=0.25mm, forget 
plot]
  table[row sep=crcr]{%
1	0.489182999999997\\
2	0.484047999999994\\
3	0.484685999999996\\
4	0.489401000000001\\
5	0.492873000000003\\
6	0.504061999999998\\
7	0.535950999999997\\
8	0.620144\\
9	0.904401\\
10	1.420166\\
11	3.00205\\
12	6.280552\\
13	17.454025\\
14	56.600211\\
15	162.609212\\
};
\end{axis}
\end{tikzpicture}%
    \vspace{-0.35cm}
    \caption{Computation times in seconds to solve 
(\ref{eq:alternativemethod}), 
in dashed line, and to solve (\ref{eq:FinalCondLPbeta}) with (\ref{eq:bGinvX}),
in solid line, in function of the state dimension $n$.}
\label{fig:Ex1_1}
\end{center}
\end{figure}

Note that the proposed method permits to check the invariance condition up to a 
40 dimensional system with 20 inputs in less than 12 seconds, while the 
alternative approach needs more than $160 s$ for $n = 15$. 

In Figure \ref{fig:Ex1_2} we report the values of $\alpha$ obtained by solving 
(\ref{eq:FinalCondLPbeta}) for $1000$ randomly generated systems with 
$n$ between $1$ and $12$ and also the normalized mismatch with respect to 
$\alpha^*$ given by (\ref{eq:alternativemethod}), i.e. $|\alpha - 
\alpha^*|/\alpha^*$, in logarithmic scale. It can be noticed that the 
approximation error is several order of magnitude smaller than the values of 
$\alpha$, in most of the cases included between $10^{-4}$ and $10^{-12}$, which 
might be due to the numerical precision rather than to real inaccuracy.

\begin{figure}[H]
\begin{center}
    \setlength\fheight{3cm}
    \setlength\fwidth{8cm}
%
%
\begin{tikzpicture}

\begin{axis}[%
width=0.951\fwidth,
height=\fheight,
at={(0\fwidth,0\fheight)},
scale only axis,
xmin=-14,
xmax=2,
xlabel style={at={(axis description cs:0.5,0.05)}, anchor=north, 
font=\color{white!15!black}},
xlabel={$\hspace*{1cm} \log(|\alpha - \alpha^*|/\alpha^*) \hspace{2cm} 
\log(\alpha)$},
ymin=0,
ymax=200,
axis background/.style={fill=white}
]
\addplot[ybar interval, fill=darkgray, draw=black, area legend] table[row 
sep=crcr] {%
x	y\\
-12.4041168130844	1\\
-12.1160344768228	1\\
-11.8279521405611	1\\
-11.5398698042994	5\\
-11.2517874680377	4\\
-10.963705131776	14\\
-10.6756227955143	15\\
-10.3875404592526	16\\
-10.0994581229909	29\\
-9.81137578672924	29\\
-9.52329345046755	42\\
-9.23521111420586	37\\
-8.94712877794418	64\\
-8.65904644168249	73\\
-8.3709641054208	86\\
-8.08288176915911	97\\
-7.79479943289742	94\\
-7.50671709663573	73\\
-7.21863476037404	65\\
-6.93055242411235	75\\
-6.64247008785066	54\\
-6.35438775158898	37\\
-6.06630541532729	27\\
-5.7782230790656	20\\
-5.49014074280391	4\\
-5.20205840654222	11\\
-4.91397607028053	10\\
-4.62589373401884	6\\
-4.33781139775716	4\\
-4.04972906149547	0\\
-3.76164672523378	1\\
-3.47356438897209	0\\
-3.1854820527104	0\\
-2.89739971644871	0\\
-2.60931738018702	0\\
-2.32123504392533	1\\
-2.03315270766364	0\\
-1.74507037140196	0\\
-1.45698803514027	2\\
-1.16890569887858	1\\
-0.88082336261689	0\\
-0.592741026355201	1\\
-0.304658690093512	0\\
-0.0165763538318235	0\\
0.271505982429865	0\\
0.559588318691554	0\\
0.847670654953243	0\\
1.13575299121493	0\\
1.42383532747662	0\\
1.71191766373831	0\\
2	0\\
};
\addplot[ybar interval, fill=lightgray, draw=black, area legend] table[row 
sep=crcr] {%
x	y\\
-12.4041168130844	0\\
-12.1160344768228	0\\
-11.8279521405611	0\\
-11.5398698042994	0\\
-11.2517874680377	0\\
-10.963705131776	0\\
-10.6756227955143	0\\
-10.3875404592526	0\\
-10.0994581229909	0\\
-9.81137578672924	0\\
-9.52329345046755	0\\
-9.23521111420586	0\\
-8.94712877794418	0\\
-8.65904644168249	0\\
-8.3709641054208	0\\
-8.08288176915911	0\\
-7.79479943289742	0\\
-7.50671709663573	0\\
-7.21863476037404	0\\
-6.93055242411235	0\\
-6.64247008785066	0\\
-6.35438775158898	0\\
-6.06630541532729	0\\
-5.7782230790656	0\\
-5.49014074280391	0\\
-5.20205840654222	0\\
-4.91397607028053	0\\
-4.62589373401884	0\\
-4.33781139775716	0\\
-4.04972906149547	0\\
-3.76164672523378	0\\
-3.47356438897209	0\\
-3.1854820527104	0\\
-2.89739971644871	1\\
-2.60931738018702	1\\
-2.32123504392533	2\\
-2.03315270766364	3\\
-1.74507037140196	5\\
-1.45698803514027	19\\
-1.16890569887858	24\\
-0.88082336261689	70\\
-0.592741026355201	92\\
-0.304658690093512	166\\
-0.0165763538318235	193\\
0.271505982429865	139\\
0.559588318691554	97\\
0.847670654953243	65\\
1.13575299121493	34\\
1.42383532747662	19\\
1.71191766373831	70\\
2	70\\
};
\end{axis}
\end{tikzpicture}
    \caption{Histograms of the values of $\log(\alpha)$, in light gray, and 
$\log(|\alpha - \alpha^*|/\alpha^*)$, in dark gray, over 1000 
tests.}\label{fig:Ex1_2}
\end{center}
\end{figure}
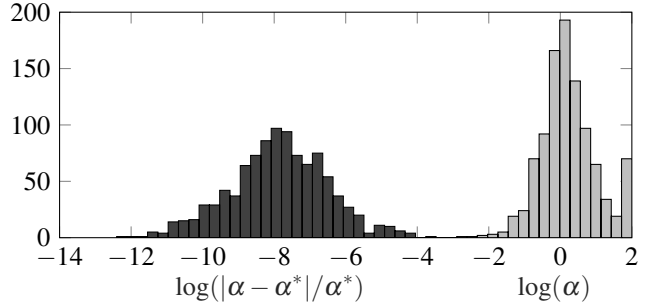

Finally, the Minkowski sum has been employed to compute $Q_N^x(\alpha \Omega, 
U, X)$ a posteriori, for evaluating its computational cost, but we could not go 
further than $n = 4$. 

\subsection{Example 2}

We apply now the proposed method to an high dimensional system, in particular 
with $n = 30$ and $m = 15$ with horizons $N = 5, 10$. To provide some hints on 
the conservatism of the control invariant obtained with respect to the maximal 
control invariant set, we build a system for which the latter can be computed. 
Indeed the classical algorithms for computing the maximal control invariant set 
are too computationally demanding to be applied to high dimensional systems in 
general. Then, a specific structure has to be imposed to the system dynamics 
for computing the maximal control invariant set to be compared with our results. 
In particular, we consider system (\ref{eq:system}) with 
\begin{equation}\label{eq:AyBy}
 A = P^{-1} \!\! \left[\begin{array}{cccc}
 A_1 & 0 & \ldots & 0\\
 0& A_2 & \ldots & 0\\
 \ldots & \ldots & \ldots & \ldots \\
 0 & 0 & \ldots & A_{15}\\
 \end{array}\right]\!\! P \!\!, \quad  
 B = P^{-1} \!\! \left[\begin{array}{ccccc}
 B_1 & 0 & \ldots & 0\\
 0& B_2 & \ldots & 0\\
 \ldots & \ldots & \ldots & \ldots \\
 0 & 0 & \ldots & B_{15}\\
 \end{array}\right]
\end{equation}
where $A_i \in \R^{2 \times 2}$ and $B_i \in \R^{2}$, for $i \in \N_{15}$, are 
matrices whose entries are randomly generated such that all $A_i$ have instable 
poles and the pairs $(A_i, B_i)$ are controllable and the maximal control 
invariant is obtained, as illustrated below, after 5 iterations at most. The 
latter requirement has been introduced for sets convergence reasons. The matrix 
$P \in \R^{30 \times 30}$ is a randomly generated nonsingular matrix. 
Figure~\ref{fig:Ex2_AB} provides a graphical representation of $A$ and $B$, for 
which the maximal values ($15.303$ for $A$ and $49.0516$ for $B$) are depicted 
in white, the minimal ones ($-13.4866$ for $A$ and $-60.4621$ for $B$) are 
drawn in black, the other values are proportional degree of gray. The matrices 
are not sparse, not a single null entry is present either in $A$ or $B$, and 
are available under request.  

\begin{figure}[H]
\begin{center}
    \setlength\fheight{3cm}
    \setlength\fwidth{5.25cm}
%
%
\begin{tikzpicture}

\begin{axis}[%
width=0.593\fwidth,
height=\fheight,
at={(0\fwidth,0\fheight)},
scale only axis,
point meta min=0,
point meta max=1,
axis on top,
xmin=0.5,
xmax=30.5,
tick align=outside,
y dir=reverse,
ymin=0.5,
ymax=30.5,
axis line style={draw=none},
ticks=none
]
\addplot [forget plot] graphics [xmin=0.5, xmax=30.5, ymin=0.5, ymax=30.5] 
{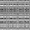};
\end{axis}
\end{tikzpicture}
    \setlength\fheight{3cm}
    \setlength\fwidth{4.5cm}
%
%
\begin{tikzpicture}

\begin{axis}[%
width=0.374\fwidth,
height=\fheight,
at={(0\fwidth,0\fheight)},
scale only axis,
point meta min=0,
point meta max=1,
axis on top,
xmin=0.5,
xmax=15.5,
tick align=outside,
y dir=reverse,
ymin=0.5,
ymax=30.5,
axis line style={draw=none},
ticks=none
]
\addplot [forget plot] graphics [xmin=0.5, xmax=15.5, ymin=0.5, 
ymax=30.5]{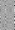};
\end{axis}
\end{tikzpicture}%
    \caption{Graphical representation of matrices $A$ and 
$B$.}\label{fig:Ex2_AB}
\end{center}
\end{figure}

Thus, the dynamics of system with state $y = P x$, is controllable and it is, 
in practice, composed by $15$ decoupled two-dimensional subsystems with one 
control input each. Hence, the maximal control invariant set in the space for 
the overall system in $y$, denoted $\Sigma$, is given by the Cartesian 
product of the maximal control invariant sets of the 15 subsystems. That is 
$\textstyle \Sigma =  \prod_{i = 1}^{15}\Sigma_i$ where 
$\Sigma_i$ are the maximal control invariant set in $10 \B^2$ for the 
$i$-th subsystem with input bound $10 \B^2$. Hence $\Sigma$ can be 
computed by computing $\Sigma_i$, being $(A_i, B_i)$ a two-dimensional 
controllable system, for all $i \in \N_{15}$. Therefore, $P^{-1} 
\Sigma \subseteq \R^{30}$ is the maximal control invariant set for the 
system (\ref{eq:system}) in $x$ with (\ref{eq:AyBy}). After computing $P^{-1} 
\Sigma$, the linear problem (\ref{eq:FinalCondLPbeta}) has been solved 
to obtain $\bar{\Omega}^{x}$ with $N = 5, 10$ and sets $\Omega = P^{-1} 
\B^{30}$, $U = 10 \B^{15}$ and $X = 10 P^{-1} \B^{30}$.

To quantify the difference between the maximal control invariant set $P^{-1} 
\Sigma$ and the set $\bar{\Omega}^{x}$, $100$ vectors $v \in \R^n$ 
are generated randomly. Then, (a lower approximation of) the maximal values of 
$r_{\Sigma}$ and $r_{\Omega}$ are computed such that $r_{\Sigma} v \in P^{-1} 
\Sigma$ and $r_{\Omega} v \in \bar{\Omega}^{x}$, through dichotomy 
method. In practice, we search for (approximations of) the intersections 
between the ray $v_r = \{r v \in \R^n: \ r \geq 0\}$ and the boundaries of the 
sets $P^{-1} \Sigma$ and $\bar{\Omega}^{x}$. The ratio between 
$r_{\Omega} / r_{\Sigma}$ is an indicator of the mismatch between the maximal 
control invariant set $P^{-1} \Sigma$ and $\bar{\Omega}^{x}$, the 
closer to one, the closer are the intersections between the ray $v_r$ and the 
two sets.

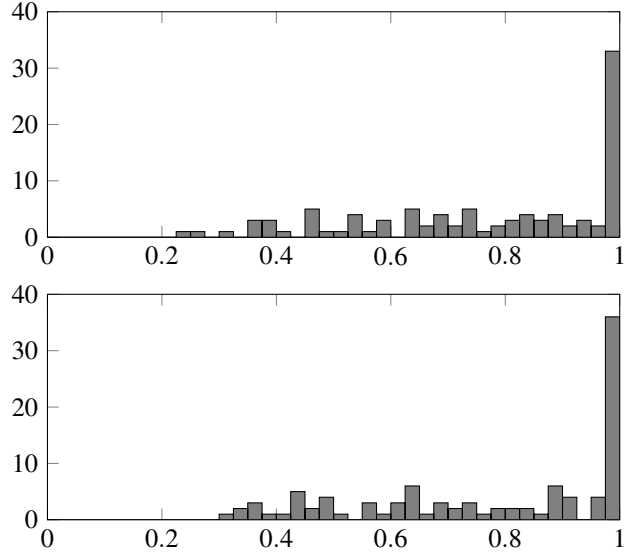
\begin{figure}
\begin{center}
    \setlength\fheight{3cm}
    \setlength\fwidth{8cm}
%
%
\begin{tikzpicture}

\begin{axis}[%
width=0.951\fwidth,
height=\fheight,
at={(0\fwidth,0\fheight)},
scale only axis,
xmin=0,
xmax=1,
ymin=0,
ymax=40,
axis background/.style={fill=white},
]
\addplot[ybar interval, fill=gray, draw=black, area legend] table[row sep=crcr] 
{%
x	y\\
0	0\\
0.025	0\\
0.05	0\\
0.075	0\\
0.1	0\\
0.125	0\\
0.15	0\\
0.175	0\\
0.2	0\\
0.225	1\\
0.25	1\\
0.275	0\\
0.3	1\\
0.325	0\\
0.35	3\\
0.375	3\\
0.4	1\\
0.425	0\\
0.45	5\\
0.475	1\\
0.5	1\\
0.525	4\\
0.55	1\\
0.575	3\\
0.6	0\\
0.625	5\\
0.65	2\\
0.675	4\\
0.7	2\\
0.725	5\\
0.75	1\\
0.775	2\\
0.8	3\\
0.825	4\\
0.85	3\\
0.875	4\\
0.9	2\\
0.925	3\\
0.95	2\\
0.975	33\\
1	33\\
};
\end{axis}
\end{tikzpicture}
    \setlength\fheight{3cm}
    \setlength\fwidth{8cm}
%
%
\begin{tikzpicture}

\begin{axis}[%
width=0.951\fwidth,
height=\fheight,
at={(0\fwidth,0\fheight)},
scale only axis,
xmin=0,
xmax=1,
ymin=0,
ymax=40,
axis background/.style={fill=white},
]
\addplot[ybar interval, fill=gray, draw=black, area legend] table[row sep=crcr] 
{%
x	y\\
0	0\\
0.025	0\\
0.05	0\\
0.075	0\\
0.1	0\\
0.125	0\\
0.15	0\\
0.175	0\\
0.2	0\\
0.225	0\\
0.25	0\\
0.275	0\\
0.3	1\\
0.325	2\\
0.35	3\\
0.375	1\\
0.4	1\\
0.425	5\\
0.45	2\\
0.475	4\\
0.5	1\\
0.525	0\\
0.55	3\\
0.575	1\\
0.6	3\\
0.625	6\\
0.65	1\\
0.675	3\\
0.7	2\\
0.725	3\\
0.75	1\\
0.775	2\\
0.8	2\\
0.825	2\\
0.85	1\\
0.875	6\\
0.9	4\\
0.925	0\\
0.95	4\\
0.975	36\\
1	36\\
};
\end{axis}
\end{tikzpicture}
    \caption{Histograms of the values $r_{\Omega} / r_{\Sigma}$ for $N = 5, 
10$ for a systems with $n = 30$ and $m = 15$.}\label{fig:Ex2_2}
\end{center}
\end{figure}

Figure~\ref{fig:Ex2_2} shows the histograms of the ratio $r_{\Omega} / 
r_{\Sigma}$ for $N = 5, 10$. As expected, the higher is the horizon $N$, 
the closer are the sets $\Sigma$ and $\bar{\Omega}^{x}$.

\section{Conclusions}

In this paper we addressed the problem of computing control invariant sets for 
linear systems with state and input polyhedral constraints. Invariance 
conditions are given, that are set inclusions involving the N-step sets, which 
are posed in form of LP optimization problems, instead of Minkowski sum of 
polyhedra. Then the procedures based on those conditions are applicable even 
for high dimensional systems. 

\bibliographystyle{plain}
\bibliography{SimplyInv_AUT}

\appendix

\section{Appendix}\label{app:A}    

\begin{lemma}\label{lem:lambdaK}
Given $K \subseteq \N_N$ and defined $\bar{K} = \N_n / K$ and 
\begin{equation}\label{eq:lambdaK}
 \Lambda(K) = \left\{\lambda \in \R^n: \ \lambda_k > 0 \ \forall k \in K, 
\quad \lambda_k = 0 \ \forall k \in \bar{K} \right\}
\end{equation}
one has 
\begin{equation}\label{eq:lambdaK2}
\begin{array}{l}
\displaystyle \{\lambda \in \R^n: \ \lambda_k \geq 0 \ \forall k \in \N_n, \ 
\mathbf{1}^T \lambda = 1\} \\ 
\hspace{1cm} = \bigcup_{K \subseteq \N_N} \{\lambda \in 
\Lambda(K): \mathbf{1}^T \lambda = 1\}
\end{array}
\end{equation}
\end{lemma}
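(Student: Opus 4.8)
The plan is to prove the claimed equality by establishing the two set inclusions separately, relying on the elementary observation that the right-hand side is nothing but the decomposition of the standard simplex into its relatively open faces, indexed by the support of $\lambda$. Throughout, I write the support of a point as $K = \{k : \lambda_k > 0\}$ and use $\bar{K}$ for its complement, as in the statement.

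First I would treat the easy inclusion $\supseteq$. Take any $\lambda$ in the right-hand side, so that $\lambda \in \Lambda(K)$ for some $K$ and $\mathbf{1}^T \lambda = 1$. By the definition (\ref{eq:lambdaK}) of $\Lambda(K)$ we have $\lambda_k > 0$ for $k \in K$ and $\lambda_k = 0$ for $k \in \bar{K}$; in either case $\lambda_k \geq 0$ for every index. Together with the normalization $\mathbf{1}^T \lambda = 1$, this places $\lambda$ in the left-hand set. This direction requires nothing beyond unwinding the definition of $\Lambda(K)$.

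Next I would prove the converse inclusion $\subseteq$, which is where the union over all subsets is used. Given $\lambda$ in the left-hand simplex, I would single out the particular subset equal to its support, $K = \{k : \lambda_k > 0\}$. Because every entry is nonnegative, each index is either strictly positive, hence in $K$, or equal to zero, hence in $\bar{K}$; thus $\lambda$ satisfies both defining conditions of $\Lambda(K)$ for this choice of $K$, while still satisfying $\mathbf{1}^T \lambda = 1$. Consequently $\lambda$ belongs to the member of the union indexed by its own support, and the inclusion follows.

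I expect no substantive obstacle here, since the result is the standard fact that a point of the probability simplex is uniquely recovered, as an element of a single relatively open face, from its support. The only point demanding a line of care is the degenerate choice $K = \emptyset$: then the positivity constraints are vacuous and the zero constraints force $\lambda = 0$, so that $\Lambda(\emptyset) = \{0\}$; intersecting with $\mathbf{1}^T \lambda = 1$ yields the empty set, which adds no spurious point and is consistent with the origin not lying in the simplex. Verifying that this bookkeeping matches (\ref{eq:lambdaK}) is the whole content of the argument.
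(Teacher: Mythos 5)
Your proof is correct and takes essentially the same approach as the paper's: both inclusions are obtained by unwinding the definition of $\Lambda(K)$, with the direction $\subseteq$ handled by choosing $K$ to be the support $\{k : \lambda_k > 0\}$ of the given $\lambda$. Your additional remark that $\Lambda(\emptyset) = \{0\}$ contributes nothing after intersecting with $\mathbf{1}^T \lambda = 1$ is a harmless bookkeeping detail the paper leaves implicit.
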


\begin{proof}
Note that for every $\lambda \in \Lambda(K)$, $\lambda_k$ is strictly positive 
if and only if $k \in K$, i.e. $K$ denotes the set of indices such that 
$\lambda_k$ is not zero, in practice. For every $\lambda$ in the l.h.s. of 
(\ref{eq:lambdaK2}), there exists a $K$, that is the set of indices for which 
$\lambda_k > 0$, such that $\lambda \in \Lambda(K)$. Analogously, every 
$\lambda$ in the r.h.s. of (\ref{eq:lambdaK2}), also satisfies $\lambda \geq 0$ 
and then it is contained in the l.h.s. set. 
\end{proof}

\end{document}